\newenvironment{proof}{\noindent{\bf Proof. }\ignorespaces}
{\hspace*{\fill}$\Box$\par\medskip}
\newtheorem{theorem}{Theorem}[section]
\newcommand{\remove}[1]{}
\begin{document}
\baselineskip=.22in

\newif\ifmarginnotes \marginnotestrue
\def\marginnotestyle
   {\scriptsize\itshape\raggedright}
\def\marginnote#1{
   \ifmarginnotes
     \marginpar{\marginnotestyle#1}%
   \fi
}

\title{\bf A Simple Algorithm for the \\ Constrained Sequence Problems}

\author{
{Francis Y.L. Chin$^{1,}$\footnote{This research was supported in part by Hong Kong RGC Grant}~,~ Alfredo De Santis$^2$, Anna Lisa Ferrara$^2$, N.L. Ho$^1$} and S.K. Kim$^3$ \\[.1in] 
{\small $^1$ Department of Computer Science and Information Systems}\\
{\small The University of Hong Kong,
 Pokfulam Road, Hong Kong}\\[.1in]
\vspace{5mm}
{\small E-mail: {\tt \{chin, nlho\}@csis.hku.hk}}\\
{\small $^2$ Dipartimento di Informatica ed Applicazioni}\\
{\small Universit\`{a} di Salerno,
 84081 Baronissi (SA), Italy}\\[.1in]
\vspace{5mm}
{\small E-mail: {\tt \{ads, ferrara\}@dia.unisa.it}}\\
{\small $^3$ Department of Computer Science and Engineering}\\
{\small Chung-Ang University, 221 Huksuk-dong Dongjak-ku} \\
{\small Seoul 156-756, Republic of Korea}\\[.1in]
\vspace{5mm}
{\small E-mail:  {\tt skkim@cau.ac.kr}}
}

\date{}

\maketitle
\begin{abstract}
\noindent In this paper we address the constrained longest common
  subsequence problem. Given two sequences $X$, $Y$ and a constrained
  sequence $P$, a sequence $Z$ is a constrained longest common subsequence
  for $X$ and $Y$ with respect to $P$ if $Z$ is the longest subsequence
  of $X$ and $Y$ such that $P$ is a subsequence of $Z$. \\
  Recently, Tsai~\cite{Tsai} proposed an $O(n^2 \cdot m^2 \cdot r)$ time
  algorithm  to solve this problem using dynamic programming technique,
  where $n$, $m$ and $r$ are the lengths of $X$, $Y$ and
  $P$, respectively. \\ In this paper, we present a simple algorithm
  to solve the constrained longest common subsequence problem in
  $O(n \cdot m \cdot r)$ time and show that the constrained longest
  common subsequence problem is equivalent to a special case of the
  constrained multiple sequence alignment problem which can also be solved
  with the same time complexity.
\end{abstract}
\footnotesize \noindent{ \bf \em keywords:}  Constrained Longest
Common Subsequence; Algorithms; Dynamic Programming; Sequence
Alignment. \normalsize

\newpage
\section{Introduction}

The longest common subsequence (LCS) problem has several
applications in many apparently unrelated fields, such as computer
science, mathematics, molecular biology, speech recognition, gas
chromatography. In molecular biology, LCS is an appropriate
measure of the similarity of biological sequences. Indeed, if we
wish to compare two strands of DNA or two protein sequences, we
may compute the LCS of them. The LCS problem on multiple sequences
is NP-hard~\cite{Mayer}. However, it may be solved in
polynomial-time for two sequences.

  Many algorithms have been designed using the dynamic programming
technique on this problem~\cite{Hirs, Masek, Wagner}. Tsai
addressed a variant of the LCS problem, {\em the constrained
longest common subsequence} (CLCS) problem~\cite{Tsai}. Given two
sequences $X$, $Y$ and a constrained sequence $P$, compute the
longest common subsequence $Z$ of $X$ and $Y$ such that $P$ is a
subsequence of $Z$. An $O(n^2 \cdot m^2 \cdot r)$ time algorithm
based on the dynamic programming technique was proposed for this
problem by Tsai~\cite{Tsai}, where $n$, $m$ and $r$ are the
lengths of $X$, $Y$ and $P$, respectively.

In this paper, we present a simple algorithm to solve the
CLCS problem in $O(n \cdot m \cdot r)$ time.
We further show that this CLCS problem is equivalent to a special case of
the constrained sequence alignment (CSA) problem which can be solved with the
same time complexity.

The rest of this paper is organized as follows. In
Section~\ref{introduction}, we define formally the CLCS problem
and characterize the structure in computing the CLCS. In
Section~\ref{dynamicprogram}, we present a simple dynamic
programming algorithm that computes the CLCS of two sequences with
respect to a constrained sequence. In Section~\ref{section:cmsa},
we show that the CLCS problem is equivalent to a special case of
the constrained sequence alignment problem~\cite{938103,cmsa02}.

\section{Characterization of the Constrained LCS Problem}
\label{introduction}

A sequence is a string of characters over a set of alphabets
$\Sigma$. A subsequence $Z$ of a sequence $X$ is obtained by
deleting some characters from $X$~(not necessarily contiguous); we
also say that $X$ \emph{contains} $Z$ if $Z$ is a subsequence of
$X$. Given two sequences $X$ and $Y$, $Z$ is a common subsequence
of $X$ and $Y$ if $Z$ is a subsequence of both $X$ and $Y$. $Z$ is
the \emph{longest common subsequence}~(LCS) of $X$ and $Y$ if $Z$
is the longest among all common subsequences of $X$ and $Y$. For
example, ``{\tt lm}'' and ``{\tt rm}'' are both the longest common
subsequences of ``{\tt problem}'' and ``{\tt algorithm}''. Let $P$
be a constrained sequence. We say that $Z$ is the
\emph{constrained LCS} of $X$ and $Y$ with respect to $P$ if $Z$
is the longest subsequence of $X$ and $Y$ and  $Z$ contains $P$
(i.e. $P$ is a subsequence of $Z$). For example, ``{\tt lm}'' is
the longest common subsequences of ``{\tt problem}'' and ``{\tt
algorithm}'' with respect to ``{\tt l}''.

Given a sequence $X = \langle x_1,x_2,\ldots,x_n \rangle$, where
character $x_i \in \Sigma$ for any $i=1, \ldots, n$, we denote the
$i$-th prefix of $X$ by $X_i = \langle x_1, x_2, \ldots, x_i
\rangle$ for any $i=1, \ldots, n$. In particular, $X_0$ denotes
the empty sequence. For example, if $X$=``{\tt algorithm}'' then
$X_3$=``{\tt alg}''.

 The following theorem
characterizes the structure of an optimal solution based on
optimal solutions to subproblems, for the constrained LCS problem.

\begin{theorem}\label{optimal_substructure}
If $Z = \langle z_1,z_2,\ldots,z_\ell \rangle$ is the constrained
LCS of $X = \langle x_1, x_2, \ldots, x_n \rangle$ and $Y =
\langle y_1,y_2,\ldots,y_m \rangle$ with respect to $P = \langle
p_1, p_2, \ldots, p_r \rangle$, the following conditions hold:

\begin{enumerate}
  \item If $x_n = y_m = p_r$ then $z_\ell = x_n = y_m = p_r$ and
    $Z_{\ell - 1}$ is the  constrained LCS of $X_{n-1}$ and
    $Y_{m-1}$ with respect to $P_{r-1}$.
  \item If $x_n = y_m$ and $x_n \ne p_r$
    then $z_\ell = x_n = y_m$ and $Z_{\ell-1}$ is the constrained LCS of $X_{n-1}$ and
    $Y_{m-1}$ with respect to $P$.
  \item If $x_n \ne y_m$ then $z_\ell \ne x_n$ implies that
    $Z$ is a constrained LCS of $X_{n-1}$ and $Y$ with respect to
    $P$.
  \item If $x_n \ne y_m$ then $z_\ell \ne y_m$ implies that
    $Z$ is a constrained LCS of $X$ and $Y_{m-1}$ with respect to
    $P$.
\end{enumerate}
\end{theorem}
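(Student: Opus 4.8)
The plan is to prove all four cases by the standard \emph{cut‑and‑paste} exchange argument used for the unconstrained LCS optimal substructure, with extra bookkeeping to track where the constraint $P$ sits inside $Z$. Two elementary closure facts will be used throughout. First, if $W$ is a common subsequence of sequences $A$ and $B$ that contains a sequence $Q$, and $a$ is a character equal to the last character of both $A$ and $B$, then $\langle W,a\rangle$ is a common subsequence of $A$ and $B$ that contains both $Q$ and $\langle Q,a\rangle$. Second, any subsequence of a prefix $A_{i}$ of a sequence $A$ is also a subsequence of $A$, so the property ``is a common subsequence of the two inputs and contains $P$'' is monotone when one input string is extended by appending characters. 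Note also that $Z$ contains $P$, so $\ell \ge r$, and hence $Z$ is nonempty whenever $r\ge 1$; I will fix once and for all an embedding realizing $P$ as a subsequence of $Z$.

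For Case~1 I would first show $z_\ell = x_n$: if $z_\ell \ne x_n$, the first closure fact gives that $\langle Z, x_n\rangle$ is a common subsequence of $X$ and $Y$ still containing $P$, of length $\ell+1$, contradicting that $Z$ is a constrained LCS. Hence $z_\ell = x_n = y_m = p_r$. Matching $z_\ell$ to $x_n$ and to $y_m$ in the embeddings of $Z$ into $X$ and $Y$ shows $Z_{\ell-1}$ is a common subsequence of $X_{n-1}$ and $Y_{m-1}$; moreover, in the fixed embedding of $P$ into $Z$ either $p_r$ is matched to $z_\ell$ (so $P_{r-1}$ embeds in $Z_{\ell-1}$) or $p_r$ is matched to some $z_j$ with $j<\ell$ (so all of $P$, hence $P_{r-1}$, embeds in $Z_{\ell-1}$). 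For optimality: if some common subsequence $W$ of $X_{n-1}$ and $Y_{m-1}$ contained $P_{r-1}$ with $|W|\ge \ell$, then $\langle W, x_n\rangle$ would be a common subsequence of $X$ and $Y$ containing $\langle P_{r-1}, p_r\rangle = P$ of length $\ge \ell+1$, a contradiction; so $Z_{\ell-1}$ is the constrained LCS of $X_{n-1}$ and $Y_{m-1}$ with respect to $P_{r-1}$. Case~2 is the same argument, except that since $z_\ell = x_n \ne p_r$ the character $p_r$ must be matched to some $z_j$ with $j<\ell$, so $P$ itself (unchanged) embeds into $Z_{\ell-1}$, and the appending step in the optimality part keeps the constraint $P$ fixed.

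For Cases~3 and 4, which are symmetric, it suffices to handle Case~3. If $x_n \ne y_m$ and $z_\ell \ne x_n$, then in every embedding of $Z$ into $X$ the last character $z_\ell$ is matched to some $x_i$ with $i<n$, so $Z$ is already a subsequence of $X_{n-1}$; together with $Z$ being a subsequence of $Y$ and containing $P$, this makes $Z$ a constrained common subsequence of $X_{n-1}$ and $Y$ with respect to $P$. If some $W$ were a longer such subsequence, then by the second closure fact $W$ would also be a common subsequence of $X$ and $Y$ containing $P$, contradicting the optimality of $Z$; hence $Z$ is a constrained LCS of $X_{n-1}$ and $Y$ with respect to $P$. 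Case~4 follows by exchanging the roles of $X$ and $Y$.

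I do not expect a real obstacle here: the argument is essentially the classical one. The only points needing genuine care are (i) in Cases~1 and 2, correctly determining whether the last character $z_\ell$ participates in realizing $P$, so that the constraint is transferred to the right prefix $P_{r-1}$ or $P$; and (ii) verifying in each maximality contradiction that the modified candidate (with a character appended, or a string truncated) is simultaneously a subsequence of \emph{both} inputs and still contains the appropriate constraint. Both are routine once the two closure facts are stated explicitly.
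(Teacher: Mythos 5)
Your proposal is correct and follows essentially the same cut-and-paste exchange argument as the paper: establish $z_\ell = x_n = y_m$ by an appending contradiction, peel off the last character to reduce to the appropriate subproblem, and derive optimality of the reduced solution by appending back. The only difference is that you make explicit the bookkeeping of where $p_r$ embeds in $Z$ (a detail the paper glosses over when asserting $Z_{\ell-1}$ contains $P_{r-1}$ or $P$), which is a welcome but minor refinement rather than a different route.
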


\begin{proof}
As $Z$ is the constrained LCS of $X$ and $Y$ with respect to $P$,
 $x_n$, $y_m$ and $z_\ell$ are the last characters of $X$, $Y$ and $Z$,
respectively, we have $z_\ell = x_n = y_m$ if $x_n = y_m$.
 Assume by contradiction that $z_\ell \neq x_n$, we may
append $x_n=y_m$ to $Z$ to obtain a constrained common subsequence of $X$
and $Y$ of length $\ell+1$, contradicting the hypothesis that $Z$, of
length~$\ell$, is a constrained LCS of $X$ and $Y$ with respect to $P$.
 Therefore, if $x_n=y_m$ then $z_l=x_n=y_m$. This will be used in the
 proofs of $1$ and $2$. Now, we prove the four properties,

\begin{description}
\item[Proof of 1.] Since $x_n=y_m=p_r$, we have
$x_n=y_m=p_r=z_\ell$. The prefix $Z_{\ell-1}$ is a common
subsequence of $X_{n-1}$ and $Y_{m-1}$ with respect to $P_{r-1}$
of length $\ell-1$. Now, we show that $Z_{\ell-1}$ is a
constrained LCS of $X_{n-1}$ and $Y_{m-1}$ with respect to
$P_{r-1}$. Assume by contradiction that there exists a constrained
common subsequence $S$ of $X_{n-1}$ and $Y_{m-1}$ with respect to
$P_{r-1}$ whose length is greater than $\ell-1$. If we append
$x_n=y_m=p_r$ to $S$ we obtain a constrained common subsequence of
$X$ and $Y$ with respect to $P$ of length greater than $\ell$,
contradicting the hypothesis that $Z$ is a constrained LCS of $X$
and $Y$ with respect to $P$.

\item[Proof of 2.] Since $x_n=y_m$ and $x_n \neq p_r$,
 then $x_n=y_m=z_\ell$ and $z_\ell \ne p_r$.
 As $z_\ell \neq p_r$, the prefix $Z_{\ell-1}$ is a
common subsequence of $X_{n-1}$ and $Y_{m-1}$ with respect to $P$
of length $\ell-1$. Now, we show that $Z_{\ell-1}$ is a
constrained LCS of $X_{n-1}$ and $Y_{m-1}$ with respect to $P$.
Assume by contradiction that there exists a constrained common
subsequence $S$ of $X_{n-1}$ and $Y_{m-1}$ with respect to $P$
whose length is greater than $\ell-1$. If we append $x_n=y_m$ to
$S$, we obtain a constrained common subsequence of $X$ and $Y$ with
respect to $P$ of length greater than $\ell$, contradicting the
hypothesis that $Z$ is a constrained LCS of $X$ and $Y$ with
respect to $P$.

\item[Proof of 3.]
Since $z_\ell \neq x_ n$, $Z$ is a constrained
common subsequence of $X_{n-1}$ and $Y$ with respect to $P$. Now,
we show that $Z$ is a constrained LCS of $X_{n-1}$ and $Y$ with
respect to $P$. Assume by contradiction that there exists a
constrained common subsequence $S$ of $X_{n-1}$ and $Y$ with
respect to $P$ whose length is greater than $\ell$, then $S$ also
is a constrained common subsequence of $X$ and $Y$ with respect to
$P$ whose length is greater than $\ell$. This contradicts the assumption that
$Z$ is a constrained LCS of $X$ and $Y$ with respect to $P$.

\item[Proof of 4.]  The proof is similar to proof of 3.
\end{description}

\end{proof}

The next theorem shows a characterization of the constrained LCS
problem when no constrained common subsequence exists.
\begin{theorem}\label{optimal_substructure1}
If there is no constrained common subsequence of $X = \langle x_1,
x_2, \ldots, x_n \rangle$ and $Y = \langle y_1,y_2,\ldots,y_m
\rangle$ with respect to $P = \langle p_1, p_2, \ldots, p_r
\rangle$, the following conditions hold:

\begin{enumerate}
  \item If $x_n = y_m = p_r$ then there is no constrained common subsequence
  of $X_{n-1}$ and $Y_{m-1}$ with respect to $P_{r-1}$.
  \item There is no constrained common subsequence of the two sequences $X'$ and $Y'$ with
        respect to $P$, for each of the following three cases:
  \begin{itemize}
     \item $X'=X_{n-1}$ and $Y'=Y_{m-1}$;
     \item $X'=X_{n-1}$ and $Y'=Y$;
     \item $X'=X$ and $Y'=Y_{m-1}$.
  \end{itemize}

\end{enumerate}
\end{theorem}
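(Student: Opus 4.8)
The plan is to prove both parts by contradiction, exploiting the hypothesis that \emph{no} constrained common subsequence of $X$ and $Y$ with respect to $P$ exists. In each case I would assume that the smaller instance \emph{does} admit a constrained common subsequence and then manufacture one for the original instance $(X,Y,P)$, yielding the contradiction. This is, in spirit, the contrapositive of the extension arguments already used in the proof of Theorem~\ref{optimal_substructure}, with the word ``longest'' weakened to ``exists''.

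For part~1, suppose $x_n=y_m=p_r$ and let $S$ be a constrained common subsequence of $X_{n-1}$ and $Y_{m-1}$ with respect to $P_{r-1}$ (possibly the empty sequence, which is legitimate when $r=1$, since $P_0$ is empty). Appending the common character $x_n=y_m$ to $S$ produces a sequence that is a subsequence of $X$ and of $Y$, and that contains $P_{r-1}$ followed by $p_r$, i.e.\ contains $P$. Hence it is a constrained common subsequence of $X$ and $Y$ with respect to $P$, contradicting the hypothesis; therefore no such $S$ can exist.

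For part~2 I would invoke the elementary monotonicity fact that every subsequence of a prefix $X_{n-1}$ (resp.\ $Y_{m-1}$) is also a subsequence of $X$ (resp.\ $Y$), because a prefix is itself a subsequence. In each of the three sub-cases $X'=X_{n-1},Y'=Y_{m-1}$; $X'=X_{n-1},Y'=Y$; and $X'=X,Y'=Y_{m-1}$, if $S$ were a constrained common subsequence of $X'$ and $Y'$ with respect to $P$, then $S$ would be a subsequence of both $X$ and $Y$ and would still contain $P$, so $S$ would be a constrained common subsequence of $X$ and $Y$ with respect to $P$, again contradicting the hypothesis. Note that no character needs to be appended here: the argument is pure inclusion of the relevant sets of (constrained) common subsequences.

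I do not anticipate a genuine obstacle. The only points requiring care are the base case $r=1$ in part~1, where one must allow $S$ to be empty and read ``constrained common subsequence with respect to $P_0$'' as ``any common subsequence'', and being explicit in part~2 that a prefix is a subsequence so that the set inclusion is valid. Everything else is a routine contradiction by extension or restriction.
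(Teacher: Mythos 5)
Your proposal is correct and follows essentially the same argument as the paper: both parts are proved by contradiction, appending $x_n=y_m=p_r$ to the hypothetical subsequence in part~1 and using the inclusion of subsequences of prefixes in part~2. Your added remarks about the empty-sequence base case and the explicit prefix-inclusion step are just slightly more careful versions of the same reasoning.
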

\begin{description}
\item[Proof of 1.] Assume by contradiction that there is no
constrained common subsequence of $X$ and $Y$ with respect to $P$
but there exists a constrained common subsequence $Z$ of $X_{n-1}$
and $Y_{m-1}$ with respect to $P_{r-1}$. Since $x_n=y_m=p_r$ then
the concatenation of $x_n$ to $Z$ is a constrained common
subsequence of $X$ and $Y$ with respect to $P$. Contradiction.

\item[Proof of 2.] Assume by contradiction that there is no
constrained common subsequence of $X$ and $Y$ with respect to $P$
but there exists a constrained common subsequence $Z$ of $X'$ and
$Y'$ with respect to $P$. This is a contradiction because $Z$ is
also a constrained common subsequence of $X$ and $Y$ with respect
to $P$.
\end{description}

\section{A Simple Algorithm}\label{dynamicprogram}
Given two sequences $X$, $Y$  and a constrained sequence $P$,
whose lengths are $n$, $m$ and $r$, respectively, we define
$L(i,j,k)$ as the  constrained LCS length of $X_i$ and $Y_j$ with
respect to $P_k$, for any $0 \le i \le n$, $0 \le j \le m$ and $0
\le k \le r$. In particular, $L(n,m,r)$ gives the length of the
constrained LCS
 of $X$ and $Y$ with respect to $P$.
We design an algorithm that computes the CLCS of $X$ and $Y$ with
respect to $P$ in $O(n \cdot m \cdot r)$ time.

If either $i<k$ or $j<k$, there is no constrained common
subsequence for $X_i$ and $Y_j$ with respect to $P_k$. We
represent this condition by denoting $L(i,j,k)= - \infty$,
where~$\infty$ represents a large number, greater than the maximum
value of~$n$ and~$m$.
 If $i=0$ or $j=0$ and $k=0$, the CLCS for $X_i$ and $Y_j$ with
respect to $P_0$ has length $0$. The characterization of the
structure of a solution for the CLCS problem based on  solutions
to subproblems shown in Section \ref{introduction}, yields the
following recursive relation, for any $0 \le i \le n$, $0 \le j
\le m$ and $0 \le k \le r$,
\bigskip
\begin{equation}
   L(i,j,k) =
    \left\{ \begin{array}{llrl}
        1 + L(i-1,j-1,k-1) & $ if $ i,j,k > 0 $ and $ x_i = y_j = p_k \\

        1 + L(i-1,j-1,k)   & $ if $ i,j > 0, x_i = y_j $ and
           $ (k=0 $ or $ x_i \ne p_k)\\

        \max( L(i-1,j,k), L(i,j-1,k) )   & $ if $ i,j > 0 $ and $
                                           x_i \ne y_j\\
      \end{array} \right.
    \label{recursionequation}
\end{equation}
    with boundary conditions, $L(i,0,0) = L(0,j,0) = 0$ and
    $L(0,j,k) = L(i,0,k) = -\infty$, for $i=0,\ldots,n$,
    $j=0, \ldots,m$ and $k=1,\ldots,r$.

\noindent This is a generalization of the recurrence formula that
computes the length of an LCS between two sequences
~\cite{leiserson}, indeed, if $k=0$, it holds that,
\bigskip
\begin{equation}
  L(i,j,0) =
    \left\{ \begin{array}{llrl}
        1 + L(i-1,j-1,0)   & $ if $ i,j > 0, x_i = y_j \\
        \max( L(i-1,j,0), L(i,j-1,0))        & $ if $ i,j > 0, x_i \ne y_j\\
        0         & $ if $ i=0 $ or $ j = 0\\
    \end{array} \right.
    \label{lcs_recurrence}
\end{equation}
\bigskip

\begin{paragraph}{Constructing the Constrained LCS}
\label{clcs_construction} The CLCS of $X$ and $Y$ with respect to
$P$ can be constructed by backtracking through the computation
path from $L(n,m,r)$ to $L(0,0,0)$. Let $Z$, the initial CLCS, be
an empty sequence. If the value of $L(i,j,k)$ is computed from
$L(i-1,j-1,k)$ or $L(i-1,j-1,k-1)$, prepend the character
$x_i$($=y_j$) to $Z$. Repeat backtracking until reaching
$L(0,0,0)$, and $Z$ is the
 CLCS of $X$ and $Y$ with respect to $P$. Recovering the computation
path of the  CLCS takes at most $O(n+m+r)$ steps.
\end{paragraph}

Thus, computing and constructing the CLCS takes
$O(n \cdot m \cdot r)$ time and space.


\section{CLCS and Constrained Sequence Alignment}
\label{section:cmsa} In this section, we show that the CLCS
problem is in fact a special case of the \emph{constrained
sequence alignment} (CSA) problem~\cite{938103,cmsa02}.

Let $X=\langle x_1,x_2,\cdots,x_n \rangle$ and
$Y=\langle y_1,y_2,\cdots,y_m \rangle$ be two sequences
over $\Sigma$, with lengths $n$ and $m$, respectively.
We define the \emph{sequence alignment} of $X$ and
$Y$ as two equal-length sequences
$X'=\langle x'_1,x'_2,\cdots,x'_{n'} \rangle$ and
$Y'=\langle y'_1,y'_2,\cdots,y'_{n'} \rangle$ such that $|X'| = |Y'| = n'$,
where $n' \ge n,m$,
and removing all space characters ``{\tt -}'' from $X'$ and $Y'$ gives
$X$ and $Y$, respectively, with the assumption that no
$x'_{i} = y'_{i} = $~``{\tt -}'' for any~$1 \le i \le n'$.
For a given distance function $\delta(x',y')$ which
measures the \emph{mutation} distance between two characters, where
$x',y' \in \Sigma \cup \{${\tt -}$\}$, the
 \emph{alignment score} of two length-$n'$ sequences
$X'$ and $Y'$ is defined as
$\sum_{1 \le i \le n'}{\delta(x'_i,y'_i)}$.

In the \emph{constrained sequence alignment (CSA) problem},
we are given, in addition to the  inputs of the sequence alignment problem,
 a constrained sequence $P=\langle p_1,p_2, \cdots,p_k \rangle$,
 where  $P$ is a common subsequence of $X$ and $Y$.
 A solution of the CSA problem is
\Big[$X' \atop Y'$\Big], an alignment of $X$ and $Y$,
such that when $X'$ is placed on top of $Y'$,
 each character in $P$ appears in an entire column of the alignment
and in the same order as $P$, i.e. there exists a list of integers
$\langle c_1,c_2,\ldots,c_r \rangle$ where
$ 1 \le c_1 < c_2 < \ldots  < c_r \le n'$,
 and for all $1 \le k \le r$, we have $x'_{c_k} = y'_{c_k} = p_k$.
The CSA problem is to find $X'$ and $Y'$ with minimum alignment score
when given two sequences $X$, $Y$, a constrained sequence $P$ and a
distance function $\delta(x',y')$.

The CSA problem can also be solved in $O(n \cdot m \cdot r)$ time and space\cite{938103}. Next, we show that the CLCS problem is equivalent  to the
CSA problem  of $X$ and $Y$ with respect to $P$, using the  distance function
 $\delta(x',y')$, where $x',y' \in \Sigma \cup \{${\tt -}$\}$,
\begin{equation}
 \label{distance}
 \delta(x',y') = \left\{
  \begin{array}{lll}
    -1 & $ if $ x' = y'   &$ {\small \it (match)} $\\
    0  & $ if $ x' \ne y' &$ {\small \it (insertion, deletion or replacement)}$\\
  \end{array}
 \right.
\end{equation}

The distance function $\delta(x',y')$ in Equation~(\ref{distance})
favors matching characters, and does not penalize mismatching characters or
 insertion of spaces.
Therefore, when the CSA alignment score is $-s$,  there are $s$ matchings
in $X$ and $Y$ with respect to $P$.

\begin{theorem}
Given two sequences $X$, $Y$ and a constrained sequence $P$, the CLCS of
problem is equivalent to the $CSA$ problem when the distance
function $\delta(x',y')$ given in Equation~(\ref{distance}) is used.
\end{theorem}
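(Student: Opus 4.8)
The plan is to establish a bijective correspondence between constrained common subsequences of $X$ and $Y$ with respect to $P$ and constrained alignments of $X$ and $Y$ with respect to $P$, in such a way that a constrained common subsequence of length $s$ corresponds exactly to a constrained alignment of score $-s$. Once this correspondence is in place, maximizing the length of a constrained common subsequence is the same as minimizing the alignment score under $\delta$, so the CLCS of $X$ and $Y$ with respect to $P$ is recovered from an optimal CSA solution and vice versa.

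First I would fix the forward direction. Given a constrained common subsequence $Z = \langle z_1, \ldots, z_s\rangle$ of $X$ and $Y$ containing $P$, there are index sets $1 \le a_1 < \cdots < a_s \le n$ and $1 \le b_1 < \cdots < b_s \le m$ with $x_{a_t} = y_{b_t} = z_t$ for all $t$. I would build an alignment by placing $x_{a_t}$ and $y_{b_t}$ in a common column for each $t$ (these are the $s$ match columns), and aligning every remaining character of $X$ or $Y$ against a space ``{\tt -}'', interleaving them in the unique order consistent with their original positions so that no column is all-spaces. Because $P$ is a subsequence of $Z$, the columns realizing $P$ form an increasing sublist of the match columns, so the alignment is a valid constrained alignment with respect to $P$. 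Its score is $\sum_i \delta(x'_i, y'_i) = -s$, since exactly the $s$ match columns contribute $-1$ and every other column contributes $0$ (a character against a space is a mismatch, scoring $0$; and there are no $\text{-}/\text{-}$ columns by construction).

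Next I would do the reverse direction. Given a constrained alignment $\bigl[{X' \atop Y'}\bigr]$ with respect to $P$, let $Z$ be the sequence of characters appearing in the columns $i$ with $x'_i = y'_i \ne \text{-}$ (i.e. the match columns that are not space-columns — but by the definition of alignment no column is $\text{-}/\text{-}$, so these are just the columns with $x'_i = y'_i$). Reading these columns left to right gives a common subsequence $Z$ of $X$ and $Y$, and the constraint columns $c_1 < \cdots < c_r$ are among them with $x'_{c_k} = y'_{c_k} = p_k$, so $P$ is a subsequence of $Z$; hence $Z$ is a constrained common subsequence. The score is $-|Z|$ because each of the $|Z|$ equal columns contributes $-1$ and all other columns (character vs.\ space) contribute $0$. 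Thus $|Z| = -\text{score}$, and an alignment of minimum score yields a $Z$ of maximum length.

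The two maps are essentially inverse to one another on the ``core'' data (the set of matched position pairs), so I would phrase the final argument simply as: for every integer $s$, there is a constrained common subsequence of length $s$ if and only if there is a constrained alignment of score $-s$; therefore $L(n,m,r) = -\min\{\text{alignment score}\}$, and from an optimal CSA alignment the CLCS is read off directly. The main obstacle — really the only delicate point — is the bookkeeping in the forward direction: checking that the unmatched characters of $X$ and $Y$ can always be interleaved into the alignment without ever creating an all-space column and without disturbing the relative order of the matched columns, so that the resulting $\bigl[{X' \atop Y'}\bigr]$ genuinely satisfies the definition of a sequence alignment. This is routine but needs to be stated carefully; everything else is a direct score computation.
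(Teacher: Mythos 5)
Your proposal is correct and takes essentially the same route as the paper: both directions (a constrained common subsequence of length $s$ yields an alignment of score $-s$ by padding with spaces, and an alignment of score $-s$ yields a constrained common subsequence of length $s$ by reading off the match columns) are exactly the two halves of the paper's proof, applied there directly to the optimal solutions. Your phrasing as a correspondence over all feasible solutions is marginally cleaner than the paper's terse assertion that the minimum-score alignment ``has the most number of matches,'' but the underlying argument is identical.
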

\begin{proof}
 Let $[ {X' \atop Y'} ]$ be the CSA solution with the minimum
 alignment score, $n' = |X'| = |Y'|$.
 By the definition of CSA with the distance function $\delta(x',y')$,
 $[ {X' \atop Y'} ]$ has the minimum alignment score only if
 $X'$ and $Y'$ have the most number of matches~(i.e. $x'_i = y'_i$)
 and every character in $P$ appears as a
 column in $[ {X' \atop Y'} ]$.
 Let $Z'$ be the subsequence of $X'$ and $Y'$, containing \emph{only}
 the matching characters in $X'$ and $Y'$.
 Obviously, $Z'$ is a common subsequence of $X'$ and $Y'$ containing $P$.
 Thus, $Z'$ is the CLCS of $X$ and $Y$ with respect to $P$.

   Let $Z$ be a CLCS solution of $X$ and $Y$ with respect to $P$ and
   $|Z|=\ell$. By definition of CLCS, $P$ is a subsequence
   of $Z$ and $Z$ is a common subsequence of both $X$ and $Y$.
   To obtain an optimal solution for the CSA problem of $X$ and $Y$
   with respect to $P$,
   we can construct $X'$ and $Y'$ by inserting spaces into $X$ and $Y$
   respectively, such that every character in $Z$ appears in the same
   position in $X'$ and $Y'$.
   Using the distance function $\delta(x',y')$,
   the alignment score of $[ {X' \atop Y'} ]$ is~$-\ell$,
   i.e., the alignment $[ {X' \atop Y'} ]$ has $\ell$ matching columns.
   Since $P$ is a subsequence of $Z$, there is a column matching
   in $[ {X' \atop Y'} ]$ for every character in $P$.
   As $Z$ is a CLCS of $X$ and $Y$ with respect to $P$, the optimal CSA
   solution for $X$ and $Y$ with respect to $P$ has at
   most~$\ell$~matches, i.e. with the minimum alignment score $-\ell$.
   Hence, $[ {X' \atop Y'} ]$ is the optimal
   CSA solution for $X$ and $Y$ with respect to $P$.
\end{proof}


\section{Conclusions}

In this paper, we have addressed the constrained longest common
subsequence problem proposed by Tsai~\cite{Tsai}. An $O(n^2\cdot
m^2 \cdot r)$ time algorithm based on the dynamic programming
technique was proposed to compute a CLCS for $X$ and $Y$ with
respect to $P$, where $n$, $m$ and $r$ are the lengths of $X$, $Y$
and $P$, respectively. We have described a simple $O(n\cdot m
\cdot r)$ time algorithm to solve this problem and have also
showed that the CLCS problem is indeed a special case of the CSA
problem.

It is not difficult to show that this problem can also be solved
in~$O( \min(n,m) \cdot r)$ space based on  Hirschberg's
Algorithm~\cite{Hirs}. We define $L^r(i,j,k)$ as the length of
CLCS of the suffices of $X$, $Y$ and $P$ starting from the $i$-th,
$j$-th and $k$-th positions, respectively. WLOG, assume $n \ge m$,
then $Z$, the solution for the CLCS problem, can be defined as
$\max_{0 \le j \le m, 0 \le k \le r }
  \Big( L(\frac{n}{2},j,k) + L^r(\frac{n}{2}+1, j+1, k+1) \Big)$ which
can be found in $O(n \cdot m \cdot r)$ time~(say $K n  m r$ time, 
where $K$ is a constant) and $O(m \cdot r)$ space.
Assume the maximum value occurs when $j = j'$ and $k = k'$, then we
can further solve the two  subproblems $L(\frac{n}{2},j',k')$ and
$L^r(\frac{n}{2}+1, j'+1, k'+1)$ in $\frac{1}{2}K  m  n  r$ time
and $O(m \cdot r)$ space.
Continuing this recursively, we can solve the CLCS problem
in the same  $O(n \cdot m \cdot r)$ time and $O(m \cdot r)$
space complexities.

\end{document}